%%
%% This is file `sample-authordraft.tex',
%% generated with the docstrip utility.
%%
%% The original source files were:
%%
%% samples.dtx  (with options: `authordraft')
%% 
%% IMPORTANT NOTICE:
%% 
%% For the copyright see the source file.
%% 
%% Any modified versions of this file must be renamed
%% with new filenames distinct from sample-authordraft.tex.
%% 
%% For distribution of the original source see the terms
%% for copying and modification in the file samples.dtx.
%% 
%% This generated file may be distributed as long as the
%% original source files, as listed above, are part of the
%% same distribution. (The sources need not necessarily be
%% in the same archive or directory.)
%%
%% The first command in your LaTeX source must be the \documentclass command.
% \documentclass[sigconf, review, anonymous, usenames,dvipsnames]{acmart}
\documentclass[sigconf, usenames, dvipsnames]{acmart}

\usepackage{algorithm}
\usepackage{algorithmic}
\usepackage{multirow}
\usepackage{slashbox}
\usepackage{soul}
 \usepackage{relsize}
 \usepackage{colortbl}
 \definecolor{paleblue}{RGB}{194,207,237}

\newcommand{\Scal}{\mathcal{S}}
\newcommand{\Pro}{\mathrm{Pr}}
\newcommand{\F}{\mathcal{F}}

    \makeatletter
\def\@fnsymbol#1{\ensuremath{\ifcase#1\or \dagger\or \ddagger\or
   \mathsection\or \mathparagraph\or \|\or **\or \dagger\dagger
   \or \ddagger\ddagger \else\@ctrerr\fi}}
    \makeatother
%%%% As of March 2017, [siggraph] is no longer used. Please use sigconf (above) for SIGGRAPH conferences.

%%%% As of May 2020, [sigchi] and [sigchi-a] are no longer used. Please use sigconf (above) for SIGCHI conferences.

%%%% Proceedings format for SIGPLAN conferences 
% \documentclass[sigplan, anonymous, authordraft]{acmart}

%%%% Proceedings format for conferences using one-column small layout
% \documentclass[acmsmall,authordraft]{acmart}

%%
%% \BibTeX command to typeset BibTeX logo in the docs
\AtBeginDocument{%
  \providecommand\BibTeX{{%
    \normalfont B\kern-0.5em{\scshape i\kern-0.25em b}\kern-0.8em\TeX}}}

%% Rights management information.  This information is sent to you
%% when you complete the rights form.  These commands have SAMPLE
%% values in them; it is your responsibility as an author to replace
%% the commands and values with those provided to you when you
%% complete the rights form.
\copyrightyear{2020}
\acmYear{2020}
\setcopyright{acmlicensed}\acmConference[AFT '20]{2nd ACM Conference on Advances in Financial Technologies}{October 21--23, 2020}{New York, NY, USA}
\acmBooktitle{2nd ACM Conference on Advances in Financial Technologies (AFT '20), October 21--23, 2020, New York, NY, USA}
\acmPrice{15.00}
\acmDOI{10.1145/3419614.3423265}
\acmISBN{978-1-4503-8139-0/20/10}
\acmSubmissionID{72}

%%
%% Submission ID.
%% Use this when submitting an article to a sponsored event. You'll
%% receive a unique submission ID from the organizers
%% of the event, and this ID should be used as the parameter to this command.
%%\acmSubmissionID{123-A56-BU3}

%%
%% The majority of ACM publications use numbered citations and
%% references.  The command \citestyle{authoryear} switches to the
%% "author year" style.
%%
%% If you are preparing content for an event
%% sponsored by ACM SIGGRAPH, you must use the "author year" style of
%% citations and references.
%% Uncommenting
%% the next command will enable that style.
%%\citestyle{acmauthoryear}

%%
%% end of the preamble, start of the body of the document source.

%% comments 
 \newcommand{\dcp}[1]{\textcolor{blue}{{\scriptsize{David:}}#1}}

\newcommand{\rithvik}[1]{\textcolor{cyan}{{\scriptsize{Rithvik: }}#1}}

\begin{document}

%%
%% The "title" command has an optional parameter,
%% allowing the author to define a "short title" to be used in page headers.
\title{Defending Against Malicious Reorgs in Tezos Proof-of-Stake}

%%
%% The "author" command and its associated commands are used to define
%% the authors and their affiliations.
%% Of note is the shared affiliation of the first two authors, and the
%% "authornote" and "authornotemark" commands
%% used to denote shared contribution to the research.
\author{Michael Neuder}
\affiliation{%
  \institution{Harvard University}
}
\email{michaelneuder@g.harvard.edu}

\author{Daniel J. Moroz}
\affiliation{%
  \institution{Harvard University}
  }
\email{dmoroz@g.harvard.edu}

\author{Rithvik Rao}
\affiliation{%
  \institution{Harvard University}
  }
\email{rithvikrao@college.harvard.edu}

\author{David C. Parkes}
\affiliation{%
  \institution{Harvard University}
}
\email{parkes@eecs.harvard.edu}

%%
%% By default, the full list of authors will be used in the page
%% headers. Often, this list is too long, and will overlap
%% other information printed in the page headers. This command allows
%% the author to define a more concise list
%% of authors' names for this purpose.
% \renewcommand{\shortauthors}{Neuder, et al.}

%%
%% The abstract is a short summary of the work to be presented in the
%% article.
\begin{abstract}
Blockchains are intended to be immutable, so an attacker who is able to delete  transactions through a chain reorganization (a {\em malicious reorg}) can  perform a profitable double-spend attack. We study the rate at which an attacker can execute reorgs in the Tezos Proof-of-Stake protocol. As an example, an attacker with 40\% of the staking power is able to execute a 20-block malicious reorg at an average rate of once per day, and the attack probability increases super-linearly as the staking power grows beyond 40\%. Moreover, an attacker of the Tezos protocol knows in advance when an attack opportunity will arise, and  can use this knowledge to arrange transactions to double-spend.  We show that in particular cases, the Tezos protocol can be adjusted to protect against deep reorgs. For instance, we demonstrate protocol parameters that reduce the rate of length-20 reorg opportunities for a 40\% attacker by two orders of magnitude. We also observe a trade-off between optimizing for robustness to deep reorgs (costly deviations that may be net profitable because they enable double-spends) and robustness to selfish mining (mining deviations that result in typically short reorgs that are profitable even without double-spends). That is, the parameters that optimally protect against one make the other attack easy. Finally, we develop a method that monitors the Tezos blockchain health with respect to malicious reorgs using only publicly available information.
\end{abstract}

%%
%% The code below is generated by the tool at http://dl.acm.org/ccs.cfm.
%% Please copy and paste the code instead of the example below.
%%
\begin{CCSXML}
<ccs2012>
   <concept>
       <concept_id>10010405.10003550</concept_id>
       <concept_desc>Applied computing~Electronic commerce</concept_desc>
       <concept_significance>500</concept_significance>
       </concept>
   <concept>
       <concept_id>10010405.10003550.10003551</concept_id>
       <concept_desc>Applied computing~Digital cash</concept_desc>
       <concept_significance>500</concept_significance>
       </concept>
   <concept>
       <concept_id>10010405.10003550.10003552</concept_id>
       <concept_desc>Applied computing~E-commerce infrastructure</concept_desc>
       <concept_significance>500</concept_significance>
       </concept>
   <concept>
       <concept_id>10010405.10003550.10003557</concept_id>
       <concept_desc>Applied computing~Secure online transactions</concept_desc>
       <concept_significance>500</concept_significance>
       </concept>
   <concept>
       <concept_id>10010405.10003550.10003554</concept_id>
       <concept_desc>Applied computing~Electronic funds transfer</concept_desc>
       <concept_significance>500</concept_significance>
       </concept>
 </ccs2012>
\end{CCSXML}

\ccsdesc[500]{Applied computing~Electronic commerce}
\ccsdesc[500]{Applied computing~Digital cash}
\ccsdesc[500]{Applied computing~E-commerce infrastructure}
\ccsdesc[500]{Applied computing~Secure online transactions}
\ccsdesc[500]{Applied computing~Electronic funds transfer}

%%
%% Keywords. The author(s) should pick words that accurately describe
%% the work being presented. Separate the keywords with commas.
\keywords{Blockchain, consensus protocols, Proof-of-Stake, chain reorganization, Tezos.}

%% A "teaser" image appears between the author and affiliation
%% information and the body of the document, and typically spans the
% %% page.
% \begin{teaserfigure}
%   \includegraphics[width=\textwidth]{sampleteaser) 
%   \caption{Seattle Mariners at Spring Training, 2010.}
%   \Description{Enjoying the baseball game from the third-base
%   seats. Ichiro Suzuki preparing to bat.}
%   \label{fig:teaser}
% \end{teaserfigure}

%%
%% This command processes the author and affiliation and title
%% information and builds the first part of the formatted document.
\maketitle

\section{Introduction}

Blockchains are designed to be immutable in order to protect against attackers who seek to delete transactions through chain reorganizations ({\em malicious reorgs}). Any attacker who causes a reorg of the chain could {\em double-spend} transactions, meaning they commit a transaction to the chain, receive some goods in exchange, and then delete the transaction, effectively robbing their counterparty. 
Nakamoto~\cite{nakamoto2008bitcoin} demonstrated that, in a {\em  Proof-of-Work} (PoW) setting, minority (<50\%) attacker forks have an exponentially decreasing probability of overtaking the honest chain (thus causing a reorg) as more honest blocks are built.
This ensures that transactions written to blocks will remain on the chain with high probability.
Because PoW protocols require significant energy expenditure and provide low transaction throughput, 
{\em Proof-of-Stake} (PoS) protocols are seen as a viable alternative and are being used by a number of projects (capitalizations as of September 2020):  Tezos \cite{goodman2014tezos} (\$1.8 billion), Cardano \cite{kiayias2017ouroboros} (\$2.5 billion), EOS \cite{eos} (\$2.6 billion), Nxt \cite{nxt} (\$11 million), and BlackCoin \cite{vasin2014blackcoin} (\$2.5 million).

A common feature of PoS protocols not present in PoW is {\em global predictability}. That is, participants in the consensus protocol ({\em stakers}) are able to determine far in advance  exactly when they will have the opportunity to mine blocks (in the case of Tezos, days ahead of time). This helps validators know when to create or validate a block, but this predictability also makes a double-spend attack easier to perform, as an attacker knows precisely when the opportunity to reorg will arise, allowing them to send a soon-to-be-deleted transaction to an unsuspecting counterparty at exactly the right time. This precision is not possible to achieve under PoW protocols, as no miner knows in advance when they will successfully mine a block.  

In this work, we analyze the {\em Tezos PoS protocol} and develop a method to calculate the rate of malicious reorgs.
We show, for example, that an attacker with 40\% of the total stake is able to execute a 20-block malicious reorg at an average rate of once per day, and that an attacker's power grows quickly as its staking percentage increases beyond that point. 
We also study the extent to which adjusting the design parameters of the protocol can protect against such attacks. We find a set of protocol parameters for Tezos PoS that  decrease the rate of attack opportunities by over 50\% compared to the current design choice for specific values of the attacker's stake. As an example, the 40\% attacker under the suggested protocol parameters only achieves a length-20 reorg once per year.

We also demonstrate a trade-off in setting these design parameters
between optimizing the PoS protocol for safety against deep reorgs versus safety against selfish mining. {\em Selfish mining} is defined as a mining strategy in which the attacker earns more protocol-prescribed rewards than behaving honestly by selectively withholding blocks instead of publishing them immediately. When feasible, selfish mining incentivizes dishonest behavior even without a double-spend (and double-spends are implausible in the typically shallow reorgs of a selfish-mine). Selfish mining has been extensively studied in PoW \cite{eyal2018majority,sapirshtein2016optimal,kwon2017selfish, nayak2016stubborn}.
We find that the parameters that optimally protect against deep reorgs also make selfish mining easier,  confirming the findings of Nomadic Labs \cite{analysisemmyplus}. Fundamentally, this is because opportunities for selfish mining occur more frequently in shorter reorgs (typically less than 5 blocks), while double-spend attacks require deeper reorgs (we consider reorgs up to length 80). In a selfish mining attack, the length of the attack is determined by the protocol rewards, and attacker continues as long as they earn more rewards than following the honest strategy. The length of a reorg needed to successfully execute a double-spend transaction depends on how many block confirmations a counterparty requires before releasing a good to the attacker. This varies depending on the blockchain, but is generally recommended to be 30 blocks for Tezos~\cite{30confs}.

We also develop a method that monitors the health of the Tezos blockchain with regard to reorg attacks, using only on-chain, publicly available information. This metric allows users to identify potentially vulnerable chain states and take extra care with accepting large transactions.

{\bf Outline.}
Section~\ref{sec:tezos-pos} presents the Tezos PoS model and describes our technique for computing the probability of a malicious reorg. 
Section~\ref{sec:results} presents the results of our analysis, showing frequency of attack opportunity as a function of attacker strength and attack depth.
For the purpose of this analysis, we estimate the probabilities of a malicious reorg using analytical methods when possible, but primarily with Monte Carlo and importance sampling (see Appendices \ref{app:mc} \& \ref{app:impsample} for more details). 
In Section~\ref{sec:protocol-mod}, we consider the relationship between parameters in the Tezos PoS protocol and the susceptibility of the protocol to reorgs and selfish mining. 
In Section \ref{sec:detection} we provide a method to monitor the likelihood of impending attacks based only on public information, and we demonstrate through simulation the effectiveness of this method.

\subsection{Related Work}
Deep reorgs are dangerous for any blockchain, and a number of studies have considered  vulnerability to these kinds of attacks.
\citet{nakamoto2008bitcoin} provided a simple analysis of the expected frequency with which a minority attacker could execute deep reorgs on Bitcoin and  \citet{powDoubleSpend} followed up in more detail, calculating the probability of different length reorgs given various attacker strengths. 

This question has also been investigated in PoS protocols. \citet{kiayias2017ouroboros} formally analyze Ouroboros, which underpins the Cardano blockchain, by studying the rate at which deep reorgs occur through a probabilistic analysis of states in which forks arise. They present security proofs as well as experimental results from their analysis, which demonstrate that long block confirmation wait times are required to avoid a double-spend from an adversary who has a large percentage of the stake. In another PoS analysis, \citet{nxt2016probabilistic} presents a probabilistic approach to reorg susceptibility in the Nxt protocol, and concludes that security in the model relies on the attacker having less than one-third of the the total staking power.  \citet{gasper} present a model of the Ethereum 2.0 beacon chain, a forthcoming PoS protocol. Their design combines the GHOST (``greedy heaviest-observed subtree'') fork choice rule \cite{sompolinsky2015secure} with the Casper consensus protocol \cite{buterin2017casper}. They prove liveness of the protocol both probabilistically and practically in the face of an adversary with less than one-third of the staking power (a constraint that is not imposed on the Tezos protocol), and minimize deep reorgs by enforcing finality through checkpoints.
Nomadic Labs, the research group that implemented the 2020 software update to Tezos, called {\em Carthage}, published a blog post analyzing the rate of forking in the Tezos consensus mechanism \cite{analysisemmyplus}. The blog post links to the code used to conduct their analyses and provides a high level intuition for the techniques used. Complementing this, we present an explicit formulation of the techniques and models used in our analysis and confirm that these two analyses match where 
they consider the same questions.

The present paper builds on \citet{neuder2019selfish}, who study selfish mining in 
Tezos and show that it is sometimes more profitable to create a length-2 reorg 
than to follow the honest protocol.
This previous work considered only selfish mining and not
the possibility of reorgs conducted for the purpose of 
double-spends. In the present paper, we focus instead on the rate at which malicious reorgs are feasible, without concern for protocol-reward based profit.
Coupled with double-spends, deep reorgs have 
the potential to generate a profit for an attacker that far exceeds protocol rewards.
We also develop a model of profitability for selfish mining of arbitrary length, instead of restricting analysis to length-2 forks,
by making use of Monte Carlo methods alongside importance sampling for sample efficiency.  

This work also relates to the literature on selfish mining.  First discussed in PoW chains by~\citet{eyal2018majority},  \citet{brown2019formal} study selfish mining for PoS, demonstrating that all longest-chain PoS systems are susceptible to {\em predictable selfish mining} and {\em predictable double-spend} attacks, by which they mean situations where attackers can determine ahead of time when either attack will be possible.
These attacks, hypothesized for abstract models of longest-chain PoS systems by \citet{brown2019formal}, are  modelled for the Tezos protocol in this work. 

\begin{figure*}
    \centering
    \includegraphics[width=0.55\textwidth]{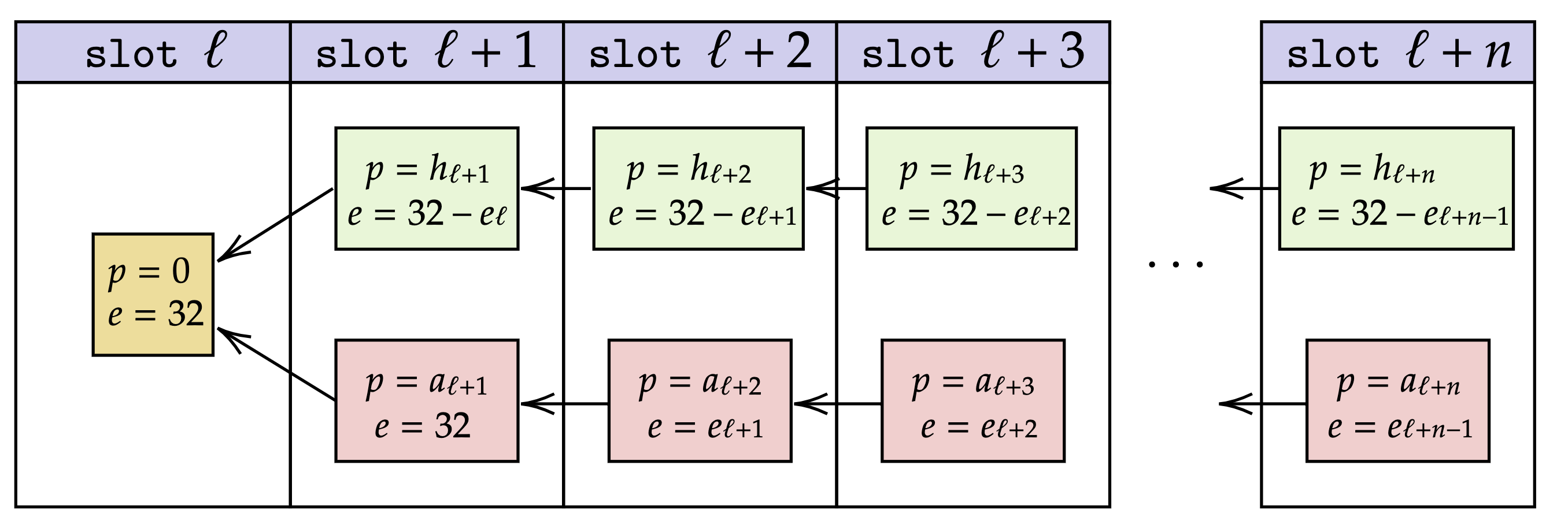}
    \caption{The progress of an honest (green blocks) and attacker (red blocks) chain for a length-$n$ malicious reorg.  The gold block on the left represents the last publicly agreed upon block, and the origin of the attacker's private fork. For each block
    we give the particular priority and number of endorsements with which it is baked. The state variables $\mathbf{a} = (a_{\ell+1}, a_{\ell+2}, ..., a_{\ell+n}), \mathbf{h} =(h_{\ell+1}, h_{\ell+2}, ..., h_{\ell+n}),\textrm{ and }\mathbf{e}=(e_{\ell}, e_{\ell+1}, ..., e_{\ell+n-1})$ denote the  sequence of attacker priorities, honest network priorities, and attacker endorsements, respectively. A length-$n$ malicious reorg is \textit{feasible}
    if the attacker can reach the $\ell+n$ block height before the honest network.
    \label{fig:tezosModel}}
\end{figure*}

\section{The Tezos Proof-of-Stake Protocol}
\label{sec:tezos-pos}

The Tezos blockchain launched in 2018 and had a market capitalization of \$1.8 billion as of September 2020, ranking 15th by this metric among cryptocurrencies. Tezos uses a longest-chain\footnote{Longest-chain refers to the fork choice rule that the honest nodes follow. When presented with conflicting forks, the honest nodes choose whichever has more blocks, as in the Bitcoin PoW implementation.} PoS protocol, and includes a number of distinguishing features such as on-chain governance and Turing-complete smart contract support \cite{coinmarketcap}. The model that we study relies on a synchronous network in which all honest operations are seen by all participants before the subsequent block is published. This is a common assumption  in the selfish mining literature \cite{eyal2018majority, sapirshtein2016optimal,neuder2019selfish, analysisemmyplus}. 

\subsection{The Protocol}

Participants in the Tezos PoS protocol put a certain amount of Tezos tokens into escrow smart contracts (staking) 
as collateral to incentivize their own honest behavior. In each block, a sample of these stakers is randomly selected to participate by filling one of two roles: {\em baker} or {\em endorser}. 
\begin{itemize}
\item Bakers collect transactions gossiped over the P2P network and assemble them into blocks to publish
(analogous to miner in Proof-of-Work protocols)
\item Endorsers  cryptographically sign the  best  block (specified below) they have heard about at a given height.
\end{itemize}

The protocol issues rewards to those selected.
At each block height, 32 stakers are chosen as endorsers. In order to ensure that a block is baked at each block height, a list of potential bakers is drawn, and the index of a staker in that list indicates the {\em priority} with which they can create a block. 

In order to allocate block creation and endorsement rights, Tezos implements a `follow-the-Satoshi' random token selection process. The currency is divided into {\em rolls} of 8,000 native tokens (XTZ). At each block height, rolls are drawn at random to determine who has the ability to create and endorse a block.  This is all the detail we need for our present model and analysis. For a more detailed description of the Tezos consensus layer, see \cite{postezos,neuder2019selfish}.

A block's validity depends not only on the validity of the transactions it contains but also  its timestamp. A block is  {\em valid} when a specified delay has elapsed between it and its predecessor. This delay is a function of the priority of the baker, $p$, (0 is highest priority, followed by 1, etc.) and the number of endorsements that the block includes, $e$. 
\begin{definition}[Validity]
In order for a block to be valid, its timestamp must be at least $\mathcal{D}(p,e)$ seconds greater than that of the previous block, where
\begin{align}\label{eq:delay}
    \mathcal{D}(p, e) = 60 + 40 \cdot p + 8\cdot\max(24-e,0).
\end{align}
\end{definition}
\smallskip

Both endorsers and bakers are rewarded for participation in the consensus layer. 
\begin{definition}[Reward]
The baker's reward, $\mathcal{R}_b(p,e)$, for a block with  $e$  endorsements  and  priority $p$, is
\begin{align}\label{eq:rewBlock}
    \mathcal{R}_b(p, e) &= 
    \begin{cases}
    1.25e & \textrm{if } p = 0 \\
    0.1875e & \textrm{otherwise}. \\
    \end{cases}
\end{align}

The endorser's reward,  $\mathcal{R}_e(p_i)$, given
the endorsement is included in a block baked by
a baker with priority $p_i$, is
\begin{align}\label{eq:rewEnd}
    \mathcal{R}_e(p_i) &= 
    \begin{cases}
        1.25 & \textrm{if } p_i = 0 \\ 
        0.8333333 & \textrm{otherwise}. \\ 
    \end{cases}
\end{align}
\end{definition}
\smallskip

Observe that the reward for baking a block without the highest priority is substantially smaller than a block with priority 0. 

We define a {\em length-$n$ malicious reorg} as the situation in which an attacker can create $n$ blocks faster than the rest of the network, with the effect that $n-1$ blocks are deleted from the public, or canonical, chain. 
This offers an opportunity for an {\em $n-1$ confirmation double-spend}, in which the attacker makes a transaction that is included on the public chain and then waits $n-1$ blocks before deleting the block that includes the transaction. 

\subsection{Malicious Reorgs}
It is helpful to define the {\em state} of the Tezos PoS protocol as it relates to the honest (public) chain and an attacker's chain.
\begin{definition}[Slot Configuration]
We use the following three variables to describe the  configuration of a slot at block height $\ell$:
\begin{itemize}
    \item $a_\ell$, the highest priority of the attacker for slot $\ell$.
    \item $h_\ell$, the highest priority of the honest network for slot $\ell$.
    \item $e_\ell$, the number of endorsement rights owned by the \textit{attacker} at slot $\ell$. 
\end{itemize}
\end{definition}
We can define the state of the Tezos PoS protocol by composing $n$ slots together.

\begin{definition}[State]
The state is made up of three sequences of length $n$.
\begin{itemize}
    \item  $\mathbf{a} = (a_{\ell+1}, a_{\ell+2}, ... , a_{\ell+n})$, the next $n$ attacker priorities
    \item $\mathbf{h} = (h_{\ell+1}, h_{\ell+2}, ... , h_{\ell+n})$, the next $n$ honest priorities
    \item  $\mathbf{e} = (e_{\ell}, e_{\ell+1}, ... , e_{\ell+n-1})$, the next $n$ attacker-owned endorsement counts
\end{itemize}
\end{definition}

Given this, let $\mathcal{S} = \{\mathbf{a}, \mathbf{h}, \mathbf{e}\}$ denote the
{\em state of the next $n$ blocks of the chain} after slot $\ell$. 
See Figure~\ref{fig:tezosModel}. 

Notice that~\eqref{eq:delay} is a function of the number of endorsements for the previous slot that the current block includes, so that the endorsements start at slot $\ell$ while the priorities start at slot  $\ell+1$. Additionally, for any given slot $i$, $a_i \textrm{ and } h_i$ are \textit{not} independent. This is because each priority must either be owned by the attacker or the honest network, but cannot be owned by both. 

\subsection{Feasibility Function}
We now develop an expression to determine if a given state $\mathcal{S}$ allows a \textit{feasible} attack. Let $\delta_a(\mathbf{a},\mathbf{e})$ denote the amount of time it takes an attacker to create $n$ blocks. Based on the delay function~\eqref{eq:delay}, we can calculate $\delta_a$ as
\begin{align}
    \delta_a(\mathbf{a}, \mathbf{e}) &= \mathcal{D}(a_{\ell+1}, 32) + \sum_{i=2}^n \mathcal{D}(a_{\ell+i}, e_{\ell+i-1}).
\end{align}

For the first block, the attacker is able to use all of the honest endorsements for the previous block. In contrast, since the attacker is selfishly endorsing their own private fork and not sharing those endorsements over the P2P network,
the honest network will only hear the $32-e_{\ell}$ endorsements that they were allocated at slot $\ell$. 
The expression for $\delta_h(\mathbf{h}, \mathbf{e})$, the time that it takes the rest of the network to create $n$ blocks, is,
\begin{align}
    \delta_h(\mathbf{h}, \mathbf{e}) &= \sum_{i=1}^n \mathcal{D}(h_{\ell+i}, 32-e_{\ell+i-1}). 
\end{align}

We do not need a state variable to keep track of the honest endorsement allocation because we know that at each slot $\ell$ the honest network is allocated $32-e_\ell$ endorsements. 

\begin{definition}[Feasibility of a length-$n$ malicious reorg]
The feasibility function for a length-$n$ malicious reorg is 
\begin{align}
    \F_n(\mathcal{S}) = \mathbb{I} [\delta_a(\mathbf{a}, \mathbf{e}) \leq \delta_h(\mathbf{h}, \mathbf{e})], 
\end{align}
where $\mathbb{I}[\texttt{condition}] = 1$ if and only if \texttt{condition} is \texttt{true}. 
\end{definition}

This function indicates if a reorg is feasible in state $\mathcal{S}$.

\subsection{Distribution on States in Tezos PoS}
In order to calculate the probability of an attack being feasible, we first determine the probability of an arbitrary state arising  on the chain. 

Let $\alpha$ denote the
{\em fraction of the total stake owned by the attacker}. 
The attacker's highest priority state variables, $a_i$, are distributed 
%\begin{align}
 $A\sim \textrm{Geometric}(\alpha)$,
where we define a geometric random variable as counting the number of failures until a success.\footnote{For example, if the highest priority an attacker owns for a particular block is 3, the probability of this occurring is, 
    $\mathrm{Pr}[A = 3] = (1-\alpha)^3 \alpha$,
because the priorities of $\{0,1,2\}$ must all be owned by the honest network. } 
The honest network's highest priority state variables,  $h_i$, are distributed
$H\sim \textrm{Geometric}(1-\alpha)$.
This is defined for the complement of $\alpha$, reflecting the amount of stake owned by the honest network.
The endorsement counts, $e_i$, are distributed 
$E \sim \textrm{Binomial}(32, \alpha)$,
reflecting the number of endorsement rights  owned by the attacker when the probability of owning any  one endorsement right is $\alpha$.

We consider the problem of calculating the probability of a given state, $\Scal = \{\mathbf{a}, \mathbf{h}, \mathbf{e}\}.$ 
By independence, the probability of $\Scal$ is  
$\mathrm{Pr}[\Scal] = \prod_{i=1}^n \mathrm{Pr}[\mathcal{S}_i]$,
where $\Scal_i = \{a_i, h_i, e_{i-1}\}$
denotes the state variables for slot $i$.
The last step is finding an expression for the value of $\mathrm{Pr}[\mathcal{S}_i]$.
Here, the events $a_i$ and $h_i$ are not independent
because the attacker  has the highest priority if and only if the honest network does not have priority of 0 for a slot. That is,
\if 0
\begin{align}\label{eq:iff}
    A = 0 \iff H \neq 0.    
\end{align}
Also notice that this is logically equivalent to,
\fi
$ H = 0 \iff A \neq 0$. 
For each slot, either $h_i$ or $a_i$ must be equal to zero, and the other must be non-zero. Using this, the probability of $\{a_i, h_i\}$ is,
\begin{align}
    \mathrm{Pr}[A=a_i,H=h_i] &= 
    \begin{cases}
    \alpha^{h_i} (1-\alpha) & \text{if } a_i=0 \\ 
    (1-\alpha)^{a_i} \alpha & \text{if } h_i=0 \\ 
    \end{cases}.
\end{align}

Since the  number of endorsement slots owned by the attacker is independent of the priority list,
the probability of $\mathcal{S}_i$ is
$\Pro[\mathcal{S}_i] = \Pro[A=a_i,H=h_i] \cdot \Pro[E=e_i]$,
and we have 
\begin{align}
    \Pro[\mathcal{S}] = \prod_{i=1}^n \Pro[A=a_i,H=h_i] \cdot \Pro[E=e_{i-1}].
    \label{eq:probstate}
\end{align}

We use~\eqref{eq:probstate} to validate the results of  Monte Carlo estimation as well as to  calculate likelihood ratios for use in importance sampling, as in \ref{subsec:attack-probs} and Appendix \ref{app:analyticTable}.

\begin{figure*}
    \centering
    \includegraphics[width=0.7\textwidth]{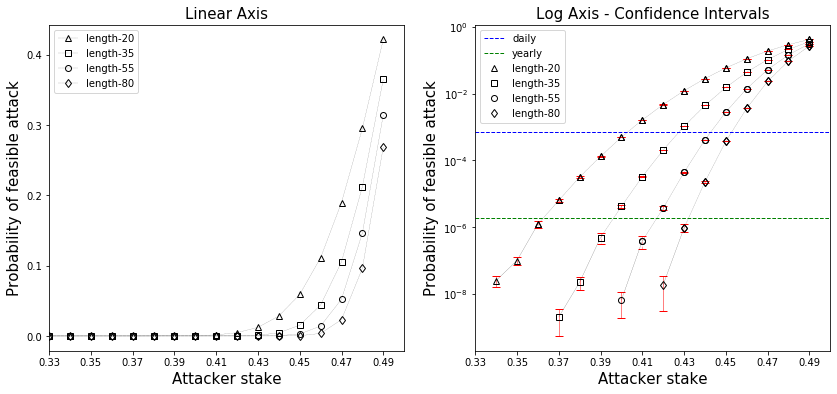}
    \caption{The probability of a  feasible reorg on Tezos PoS for different attacker stakes and lengths of reorg, linear axis (left) and log axis (right). 
    The right-hand plot also shows the probability that corresponds to a daily attack
    and an annual attack, in blue and green, respectively. The right figure also displays the 99\% confidence interval for each estimator as red errorbars. For probabilities above $10^{-7}$ these are Clopper-Pearson intervals, and for lower probabilities we use importance sampling. 
    \label{fig:highalphas}}
\end{figure*}

\subsection{Calculating Cost}\label{sec:cost}

Another factor of interest in the analysis of Tezos is 
the cost of an attack. Because the attacker is playing a dishonest policy, the rewards earned over the next $n$ blocks will differ from what they would have been had the attacker played honestly.

Given a feasible state $\mathcal{S}$, we provide an expression to calculate the cost of executing a length-$n$ reorg. Let $\phi_h(\mathcal{S})$ be the reward for the attacker playing honestly over the next $n$ blocks defined by state $\mathcal{S}$, and $\phi_a(\mathcal{S})$ be the reward for the attacker executing the reorg policy. These values are based only on the block reward, and do not correspond to the potential profit from performing a double-spend. Then we define the cost of an attack as  $\mathcal{C}(\mathcal{S}) = \phi_h(\mathcal{S}) - \phi_a(\mathcal{S})$.

There is no guarantee that the cost is positive. Rather, states in which the attacker earns more by executing the reorg are susceptible to selfish mining~\cite{neuder2019selfish}. We calculate $\phi_a$ and $\phi_h$ in Tezos as the sum of the block and endorsement rewards earned from the respective policies over the course of the next $n$ blocks (Equations \ref{eq:rewBlock} \& \ref{eq:rewEnd}). 
The reward of the $i^{th}$ slot of the state, $\mathcal{S}_i$, under the honest policy  is, 
\begin{align}
    \phi_{h,i}(\mathcal{S}_i) &= 
    \begin{cases}
        \mathcal{R}_b (0, 32) + e_{i-1} \mathcal{R}_e(0) & \textrm{if } a_i = 0 \\ 
        e_{i-1} \mathcal{R}_e(0) & \textrm{if } a_i \neq 0 .
    \end{cases}
\end{align}

In this case, the attacker will only earn a block reward if they have priority $0$ for that block. Otherwise, they simply earn the endorsement value for ending up on the highest priority block. Given this, the total reward for following the honest policy for the next $n$ blocks is,
\begin{align}
    \phi_h(\mathcal{S}) = \sum_{i=1}^{n} \phi_{h,\ell+i}(\mathcal{S}_{\ell+i}). 
\end{align}

The reward for following the attacking policy for the $i^{th}$ slot is,
\begin{align}
    \phi_{a, i}(\mathcal{S}_i) &= \mathcal{R}_b(a_i, e_{i-1}) + e_{i-1}\mathcal{R}_e(a_i).
\end{align}

When the $n-1$ blocks are deleted, each of the $n$ blocks on the attacker fork are accepted, and thus earn a block reward. In addition, the attacker's endorsement reward is  parameterized with the value of $a_i$. The total reward earned by following the attacking policy for $n$ blocks given state $\mathcal{S}$ is,
\begin{align}
    \phi_a(\mathcal{S}) &= \mathcal{R}_b(a_\ell, 32) + e_{\ell-1}\mathcal{R}_e(a_\ell) + \sum_{i=2}^n \phi_{a,\ell+i}(\mathcal{S}_{\ell+i}).
\end{align}

As with the feasibility function, the index  starts at $2$ because the initial block of the attacker fork will include all 32 endorsements. Combining, the cost for an attack parameterized by the state $\mathcal{S}$ is
$\mathcal{C} (\mathcal{S}) = \phi_h(\mathcal{S}) - \phi_a(\mathcal{S})$.
This provides the cost of an attack given that it is feasible to execute a length-$n$ malicious reorg. We use this expression to calculate the cost of deep reorgs, as well as identify states that are susceptible to selfish mining.

\subsection{Estimating Attack Probabilities}\label{subsec:attack-probs}

We use three techniques to estimate the probability of a reorg being feasible. For small reorgs,
we can estimate probabilities directly, by enumerating a large subset of the high probability states and counting those which allow feasible reorgs. Because the number of possible states (permutations) grows exponentially in fork length, this enumeration technique quickly becomes computationally intractable
(see Appendix \ref{app:analytic} and~\ref{app:analyticTable}). 

Mainly, we use Monte Carlo estimation, along with importance sampling for variance reduction.
Monte Carlo estimation provides  an unbiased estimator for the probability of a length-$n$ reorg by randomly generating states according to the attacker's stake, $\alpha$, and dividing the number of feasible states by the sample size. This works  well in cases where the probability of an attack is relatively large.

When trying to estimate the probability of a rare event, however, extremely large sample sizes are necessary in order to obtain a tight confidence interval. 
For better sample efficiency, we use importance sampling, which relies on defining a new distribution in which the event of interest  occurs more frequently, and then weighting the value in the sum by the likelihood ratio, which is the probability that the event occurs under the original distribution divided by the probability that it occurs under the modified distribution. See Appendix \ref{app:mc} and Appendix \ref{app:impsample} for details, along with a comparison with direct calculations (Appendix \ref{app:analyticTable}) and a demonstration of the variance reduction obtained through importance sampling (Figure \ref{fig:importancesamp}, Appendix \ref{app:impsample}).

\section{Malicious Reorgs in Tezos PoS}
\label{sec:results}

We now present the probabilities with which deep reorgs are feasible
in Tezos. 
As outlined in Section \ref{subsec:attack-probs}, we approximate these probabilities using a standard Monte Carlo method as well as importance sampling for events occurring with probability less than $10^{-7}$. 

Figure \ref{fig:highalphas} shows the  probability of a feasible reorg on Tezos PoS, along with the confidence intervals of the attack probabilities for length 20, 35, 55, \& 80 reorgs. In Figure \ref{fig:highalphas} (left), we see that the probability of a feasible attack is relatively low until around $\alpha=0.42$, above which the probability of feasible attack increases rapidly.

Figure \ref{fig:highalphas} (right)  uses a log y-axis,
and the blue and green lines correspond to the probability level at which 
we would see an attack in expectation, once daily and once yearly respectively. 
Notice that with an attack stake of about 36\%, an attacker can expect to do a length-20 reorg once a year, and with  an attack stake of about 40\%, an attacker can expect to do a length-20 reorg once a day. 

Beyond feasibility, deep reorgs are cheap to execute in terms of amount of block reward lost relative to playing an honest strategy. By calculating the average cost of a feasible attack through Monte Carlo simulations, we find that all attacks less than or equal to length 32 cost under 305 XTZ ($\approx \$700$ as of April 2020), with longer attacks being similarly cheap. 

These costs are small relative to the large potential profit from a successful double-spend. The primary challenge in implementing a deep reorg is in obtaining enough stake to create a feasible attack. 

\if 0
\subsection{Comparison: Reorgs in Other PoS Protocols}
\label{sec:others}

\dcp{move anything we want to say, including about PoW, to related work}

We also discuss similar calculations done for Nxt \cite{nxt2016probabilistic} and Ouroboros \cite{kiayias2017ouroboros}.

We also compare the rate of reorgs in Tezos with that in other  PoS protocols.
We first recap known results for  Nxt, which  is  a public blockchain and provides 
one of the very first implementations of PoS \cite{nxt2016probabilistic}.
We also provide results for a simplified Ouroboros model, following the work  of  \citet{kiayias2017ouroboros}. 
Ouroboros is the PoS protocol for Cardano, another public blockchain.

Nxt permits a much simpler analysis than Tezos, 
with a closed-form expression  
for the feasibility of malicious reorgs.
Participants in the Nxt consensus layer earn the right to create ("forage") a block by comparing the output of their cryptographic signature on the generation signature of the previous block to a target value. This target value increases as time passes \cite{nxt}.  

\citet{nxt2016probabilistic} provides an extensive probabilistic analysis of the Nxt foraging algorithm.
Equation 15~\cite{nxt2016probabilistic} 
gives the probability that an attacker with $\alpha$ proportion of the stake 
is able to create a better chain of length-$n$ as:
\begin{align}
    \mathrm{Pr}[\mathcal{F}_n(\mathcal{S}) | \alpha] &= (4\alpha(1-\alpha))^n.
\end{align}

Turning to Ouroboros, we adopt the simplified 
{\em Static Stake model}, 
as described in the original whitepaper \cite{kiayias2017ouroboros}. 
In this model, each slot in the chain has a specified block proposer. If that proposer fails to create a block at that slot, then no block is produced, and the next slot leader will add their block to the last on the public chain. 

The feasibility of malicious reorgs in this model of Ouroboros can be described through a binomial distribution  parameterized by the attacker stake $\alpha$.
An attack of length-$n$ is feasible at the current slot if and only if, over the next 
$2n-1$ slots, at least $n$ slots  are owned by the attacker. We have,
\begin{align}\label{eq:ouroboros}
    \mathrm{Pr}[\mathcal{F}_n(\mathcal{S}) | \alpha] &= \mathrm{Pr}[B \geq n],
\end{align}
where $B \sim \textrm{Binomial}(2n-1, \alpha)$. Again this provides a closed form solution for the probability that a length-$n$ reorg is feasible.

The probability of having a feasible deep reorg in Tezos is   substantially lower than the results of this analysis for Nxt and Ouroboros.
\dcp{add a couple of numerical examples here}
This noted, the model of Ouroboros is dramatically simplified from the true implementation, and in \citet{kiayias2017ouroboros} they describe three more models, each of which  add a layer of complexity and represent the true protocol more accurately. 
As a result the true probability of deep reorgs in Ouroboros is far lower than the ``Static-stake model" implies, and indeed, due to the use of finality, in many cases may be zero. 
In the analysis of Nxt, \citet{nxt2016probabilistic} points out the high probability of deep reorgs when the attacker stake is large, but generally establishes a much lower security bound of $\alpha < 1/3$.

That being said, the relatively small probabilities of a feasible, deep reorg in Tezos are encouraging, and can largely be attributed to the use of endorsements in the Tezos PoS protocol. By making 32 endorsement rights per block, an attacker has to get lucky not only in their block creation rights, but also in how many endorsements they are allocated, making reorgs quite improbable. 

\fi

\if 0

\subsection{Comparison: Reorgs in a PoW Protocol} 

\dcp{i get a bit lost in the following.  it seems we're hedging a lot, and the story is muddled. suggest to consider dropping. the `out of scope` comment doesn't look great either  }

We also find it interesting to consider the same question for classical (Bitcoin-style) PoW protocols: {\em  if an attacker with a large fraction of the network mining power seeks to prioritize deep reorgs, rather than profit from mining rewards,
then what is the probability that a deep reorg will be feasible}?

Malicious reorgs in PoW are much harder to execute than in PoS because of the lack of predictability. In order to successfully execute a double-spend, an attacker has to include large transactions to a counterparty in each block. This is because the attacker can not know in advance when a reorg will be feasible. This makes direct comparison to PoS challenging, because making large transactions in many successive blocks is cumbersome, expensive, and may alert the counterparty of suspicious behavior. Nevertheless, we  find it interesting to present a comparative analysis of the probability with which a length-$n$ malicious reorg is feasible.

From the work of \citet{powDoubleSpend}, we  know  that the probability of a length-$n$ malicious reorg is the exact same as the simplified Ouroboros model in Equation \ref{eq:ouroboros}. To reiterate, an attack is feasible if over the next $2n-1$ blocks, the attacker mines at least $n$ of them. Again, the probabilities cannot be directly compared to the Tezos and other PoS results because the lack of predictability in PoW. \rithvik{I think we need to loosen up some of the self-defeating language here. In the previous paragraph we say ``challenging,`` ``nevertheless,'' and here we say ``again...''. I think outline should be (1) we consider this question for PoW, (2) drawbacks: predictability; slow; nothing at stake, (3) remove reformulated question + ``out of scope'` and instead say something softer like ``not a natural candidate for our type of analysis''. I think this could happen in rougly 2 paragraphs?} 

In addition to lack of predictability, this binomial model only considers the one-shot game where an attacker starts at a public block and continues mining on their private chain until either they or the honest network has reached a length-$n$ fork. In reality, this is an extremely inefficient attack because the attacker would only be able to launch the attack once every $n$ blocks, which would be prohibitively slow. PoW differs from PoS in that the attacker can only effectively mine on one block at a time, while in PoS it is essentially free to mine on any block in the chain history simultaneously, which often referred to as the ``Nothing at Stake" problem. Therefore perhaps a more interesting question is as follows: \textit{how long should the attacker continue mining on their private fork before giving up and starting again at the public head?} Essentially this turns the one-shot game into a repeated game in which the attacker wants to minimize the amount of wasted computation, while maximizing the probability of a deep reorg. This is out of the scope of this work, but presents a natural extension to our analysis.

\fi

\section{Protocol Modifications}\label{sec:protocol-mod}

\begin{figure*}
    \centering
    \includegraphics[width=0.7\textwidth]{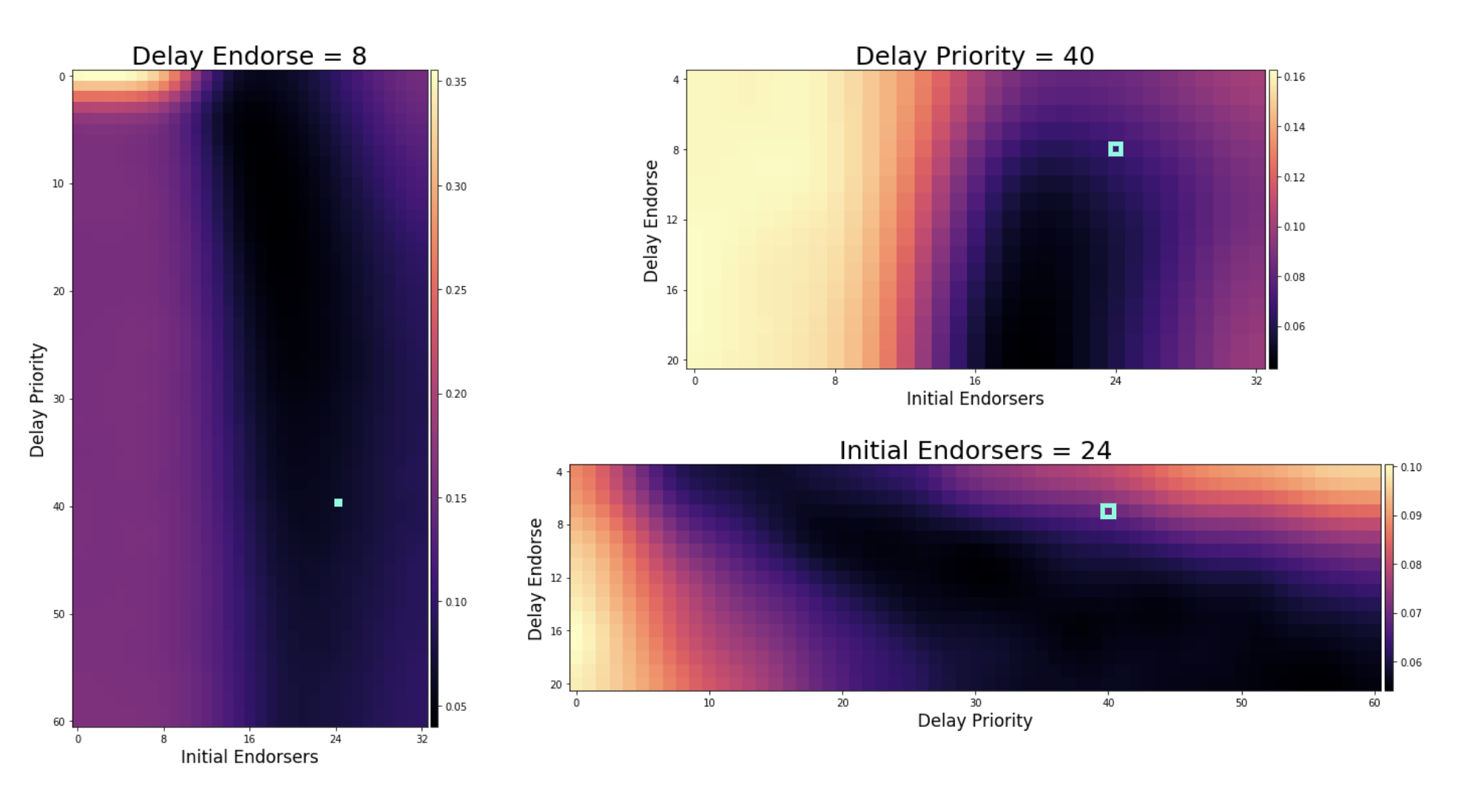}
    \caption{Slices of the 
    design space, optimized here for $\beta=0.5$, which considers an objective that seeks to minimize the average probability of a  length-20 reorg and a length-3 selfish-mine for an attacker with stake $\alpha=0.45$. Lower (darker) values 
    are desirable. For each slice, we fix the third design parameter to that in the Tezos PoS protocol, and the blue boxes correspond to the Tezos design $\xi = (24, 8, 40)$.
    \label{fig:diffparams}}
\end{figure*}

In this section, we consider what would happen if the protocol parameters in
the Tezos PoS implementation were modified.
This is in the same vein as the ``Choosing the Constants'' portion of the Emmy+ analysis \cite{analysisemmyplus}, which considers the probability of selfish mining and deep reorgs separately and heuristically chooses parameters based on a visual analysis of the figures.

We adopt the same protocol parameters, but  expand on their analysis, giving a well-defined objective and exploring a larger domain of protocol parameter combinations.
To choose better protocol parameters, we consider minimizing a weighted sum of the probability of selfish mining and deep reorgs concurrently.

Although both deep reorgs and selfish mining result in deleting blocks from the public chain,
these two objectives  have opposite effects on the choice of protocol parameters. 
This surprising phenomenon comes about  because selfish behavior occurs on a much shorter scale than the long reorgs that offer double-spend opportunities. 

To understand this, note that an endorsement included on a low priority block is worth much less than one on the highest-priority block~\eqref{eq:rewEnd}. For this reason, each time an attacker builds a low priority block using their private set of endorsements, each of the endorsements loses considerable value. This causes the profitable forks to occur overwhelmingly at shorter block lengths (e.g., an attacker with $\alpha=0.45$ never finds a profitable selfish-mine of length 20, while if they are looking just to do a reorg of that length, they have a probability of $0.1$ per block). 
As a result, an attacker looking to execute a selfish-mine prefers an extremely high value of \texttt{Initial Endorsers} (as described below), because if they have a few blocks with many of the top priorities, they can slow the honest network down by withholding relatively few endorsements. 
Now consider instead an attacker trying to perform a long-range reorg. With a high value of \texttt{Initial Endorsers}, the attacker will have to pay a time penalty for each missing endorsement, and thus each block will be slower, making long range forks nearly impossible. This demonstrates why optimizing the protocol to resist only deep reorgs or only selfish mining is unsuccessful. A safe protocol must find a balance between these two attack vectors. 

\subsection{Protocol Parameters}

In our analysis, we  keep the structure of the Tezos PoS protocol unchanged, but alter the following three protocol parameters, which are the same as those modified by Nomadic Labs \cite{analysisemmyplus}.
\begin{enumerate}
    \item \texttt{Initial Endorsers (default=24):} This is the minimum number of endorsements needed to not be penalized for missing endorsements. Denoted $e_i$ ($\mathit{ie}$ in \cite{analysisemmyplus}).
    \item \texttt{Delay Endorse (default=8):} The time penalty, in seconds, accrued for each missing endorsement below \texttt{Initial Endorsers}. Denoted $d_e$ ($\mathit{de}$ in \cite{analysisemmyplus}).
    \item \texttt{Delay Priority (default=40):} The time penalty, in seconds, accrued for each drop in priority below 0. Denoted $d_p$ ($\mathit{dp}$ in \cite{analysisemmyplus}). 
\end{enumerate}
\begin{figure*}[t!]
    \centering
    \includegraphics[width=0.7\textwidth]{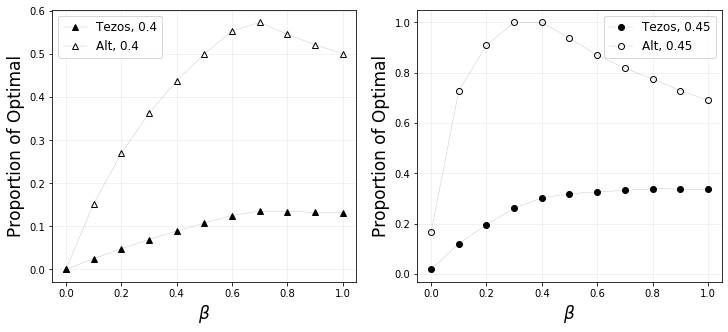}
    \caption{ The relative performance of the Tezos PoS protocol, $\xi = (24, 8, 40)$, 
    compared to a design with parameters $\xi=(15,5,8)$, 
    for both $\alpha=0.4$ and $\alpha=0.45$.
For example, a value of $0.4$ means that the performance of the Tezos PoS design is $2.5$ ($2.5=1/0.4$) times larger than the minimum (optimal for $\alpha$) value. 
For both choices of $\alpha$, the alternative design outperforms the Tezos PoS design 
for all values of $\beta$. The left figure was calculated with $n_1=3 \;\&\; n_2=8$ and the right figure with $n_1=3 \;\&\; n_2=20$.
\label{fig:weightingTraces}}
\end{figure*}

The default values are from the Carthage version of the Tezos PoS specification \cite{cathagedoc}.
With these parameters, we use the same general formulation of delay as Nomadic Labs \cite{analysisemmyplus}: 
\begin{align}
    \mathcal{D}(p, e) &= 60 + d_p \cdot p + d_e \cdot \max (e_i - e, 0).
\end{align}

By varying the values of these parameters, we  examine how the frequency of deep reorgs and selfish-mines changes as the protocol parameters are altered.

\subsection{Balancing Malicious Reorgs and Selfish Mining} 

We optimize the design for a given value of $\alpha$ (say $\alpha=0.45$ or $\alpha=0.4$), because large stake attackers have the highest probability of successfully launching reorg attacks. 
We also fix two lengths, $n_1$ and  $n_2$, for the deep reorg and selfish-mine respectively.

In order to ensure the objective function weights the probabilities relatively evenly, we choose a deep reorg length of $20$ and a selfish-mine of length 3 which, for $\alpha=0.45$, have approximately the same probability of occurring. This ensures that the design objective 
considers both kinds of attacks. 
We  vary  the choice of  objective weights between the 
probability of malicious reorgs and the probability of selfish mining.

Let $\mathcal{F}_n(\mathcal{S})$ denote the feasibility function of a reorg of length-$n$.
%(the main focus of this work).
Let $\mathcal{P}_n(\mathcal{S})$ denote a function describing if a reorg of length-$n$ is \textit{profitable} and \textit{feasible} in the given state (e.g., a selfish-mine state). 
We represent the Tezos PoS design parameters by the tuple $\xi = (e_i, d_e, d_p)$,
and for a given attacker stake $\alpha$, consider the objective 
\begin{align}
    \mathcal{O}(\xi,\beta) &= (1-\beta)\underbrace{\mathrm{Pr}[\mathcal{F}_{n_1}(\mathcal{S}) | \alpha, \xi]}_{\mathcal{O}_1} + \beta \underbrace{\mathrm{Pr}[\mathcal{P}_{n_2}(\mathcal{S})  |  \alpha, \xi]}_{\mathcal{O}_2},
%    \mathcal{O}(\xi, \beta) &= (1-\beta) \cdot \mathcal{O}_1 + \beta \cdot \mathcal{O}_2.
\end{align}
where parameter $\beta\in[0,1]$  controls the relative weight given to 
the considerations of $\mathcal{O}_1$ and $\mathcal{O}_2$,
corresponding to the probability of a deep reorg of length $n_1$ and a selfish-mine of length $n_2$, respectively.

For the choices of $n_1=20$ and $n_2=3$, and with $\alpha=0.45$,  the probabilities of reorgs and selfish mining are $0.05974$ and $0.07243$, respectively. 
We also demonstrate similar results with $\alpha =0.4, n_1=8$, and $n_2=3$ in Figure \ref{fig:weightingTraces}. 

We consider the following choice of PoS design parameters:
\begin{itemize}
    \item \texttt{Initial Endorsers}: $e_i\in [0, 32]$
    \item \texttt{Delay Endorse}:  $d_e\in [4, 20]$
    \item \texttt{Delay Priority}: $d_p\in [0, 60]$
\end{itemize}

These are chosen to provide a large range around the current Tezos implementation of $\xi = (24, 8, 40).$

\subsection{PoS Protocol Design Results}
Figure \ref{fig:diffparams} demonstrates three slices of the resulting 3D array of the objective function, for the choice of $\beta=0.5$.
  Figures~\ref{fig:diffparams} and~\ref{fig:weightingShift} 
 include a Gaussian filter  to reduce noise and display the underlying structure more clearly.
 Each slice fixes the value of the current Tezos implementation for 
 one of the variables, and the blue square in each highlights the current Tezos combination of $\xi = (24, 8, 40)$. 

Figure \ref{fig:diffparams} 
illustrates the structure of $\mathcal{O}$, for $\beta =0.5$,
as a function of the design parameters. Low values (dark) are preferable.
The Tezos implementation has a relatively low value.
However, there are better choices for design parameters $d_p$ and $d_e$, and these also have the added benefit of keeping the block creation rate fast, and in some cases, even speeding it up.

Figure \ref{fig:weightingTraces} compares the current Tezos parameterization with an alternative PoS protocol that makes use of design parameters $\xi=(15,5,8)$,
varying $\beta$ on the x-axis and showing results 
for both attacker stake $\alpha=0.4$ and $\alpha=0.45$. 
At each value of $\beta$, we plot the ratio of the minimum value of the objective function to the value of either the alternative or current protocol parameters. For example, a value of $0.4$ means that Tezos PoS achieves an objective that is 2.5 times larger than the minimum value acheived by any $\xi$. Sometimes, the reduction in probability is far greater. For example, an $\alpha=0.4$ attacker can expect about 24 length-10 reorgs a day using the current Tezos implementation, but only 1 per-day under our modified protocol. Further, that same attacker could expect about one length-20 reorg per day under the Tezos protocol, but only one per year under the modified system. 

Figure~\ref{fig:weightingShift} illustrates
the way in which different weightings  change the value of the objective function. 
Each cell represents a specific value of design parameters $\xi$. 
In this case, we give results for a choice of $\beta=0$, $0.7$, and $1$. 
Figure~\ref{fig:weightingShift} (top) only 
cares about malicious reorgs,
the objective in Figure~\ref{fig:weightingShift} (middle) cares about both,
and Figure~\ref{fig:weightingShift} (bottom) only cares
about selfish mining. 
As the weighting changes, the landscape essentially flips from high values on the left to high values on the right. This demonstrates that minimizing for the two objectives can pull the protocol parameters in opposite directions, confirming that need both
to be considered for choosing the rules of PoS. 
\begin{figure}
    \centering
    \includegraphics[width=0.4\textwidth]{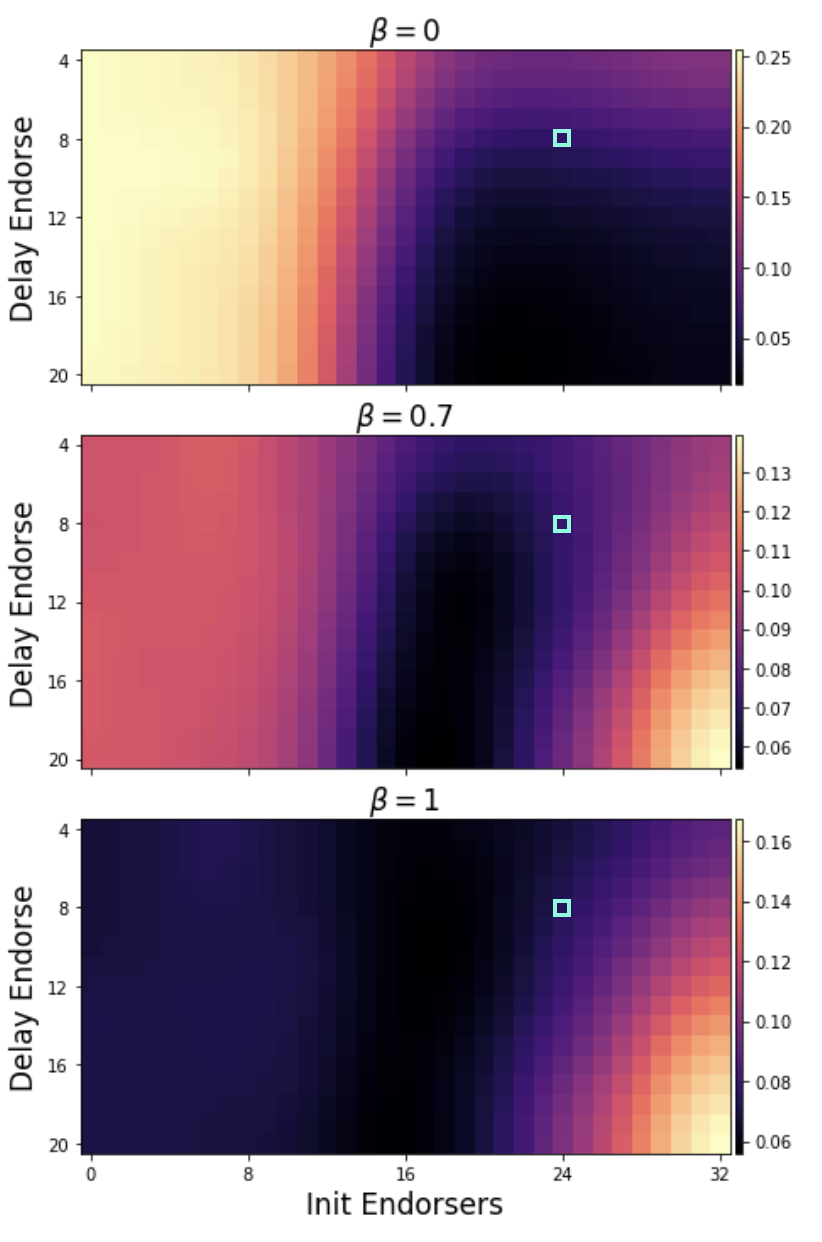}
    \caption{ The relative performance of the Tezos PoS protocol for choices
    of $\beta=0$, $\beta=0.7$, and $\beta=1$ (top, middle, bottom).  Each 
    problem of optimal design 
    fixes \texttt{Delay priority = 40}, which is the choice
    made in the  Tezos PoS protocol and the blue boxes correspond to the Tezos design.
    The shape of the landscape completely flips as a result of different considerations
    of objective weight $\beta$. 
    \label{fig:weightingShift}}
\end{figure}

The current Tezos implementation appears to perform far worse than the alternative. However, we caveat this result with two important features of the objective function. First, the regions of low values on the objective landscape are large. Figure \ref{fig:diffparams} presents this visually with relatively widespread dark regions that imply that many possible parameter configurations will do well to keep the objective low. Second, this only considers a single value of $\alpha$.

To understand the effect of different stake amounts, Figure \ref{fig:weightingTraces}  shows an alternative value of $\xi$ outperforming Tezos both at $\alpha=0.4$ and $\alpha=0.45$, but these are still high values of $\alpha$. It is plausible that the current Tezos implementation outperforms the alternative when the attacker stake and thus the probabilities of attacks are far lower.

\section{Detecting Malicious Reorgs in  Tezos PoS}
\label{sec:detection}

\begin{figure*}
    \centering
    \includegraphics[width=0.7\textwidth]{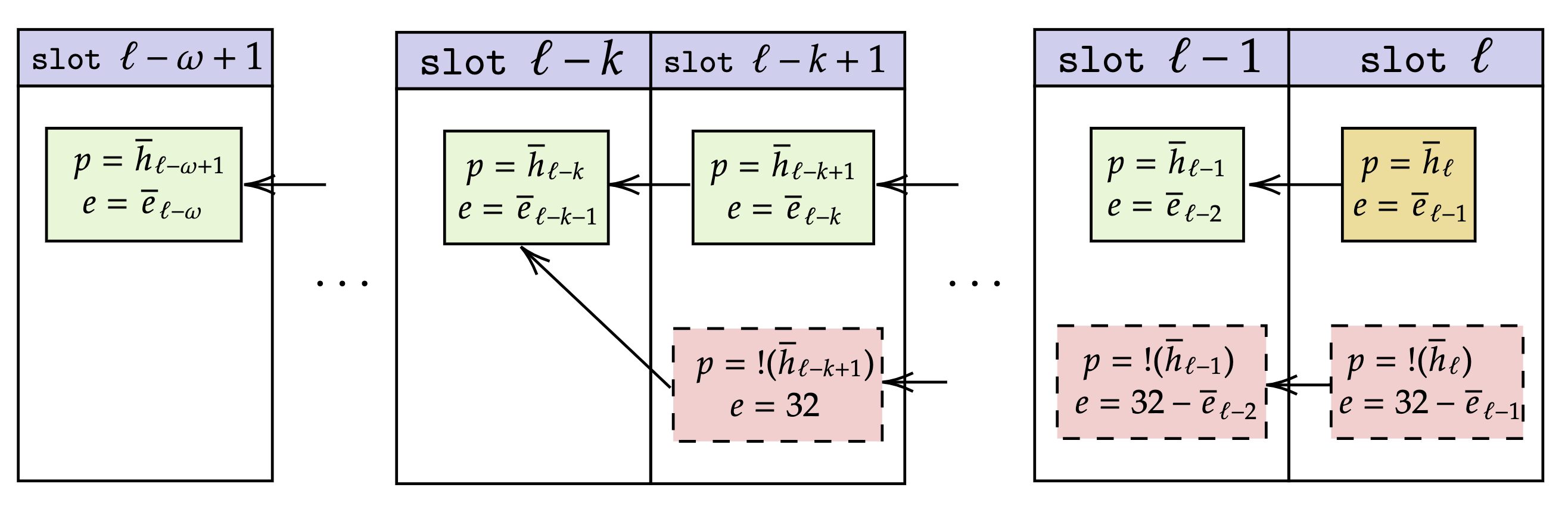}
    \caption{This figure helps explain the intuition behind our health metric for Tezos PoS. The gold block represents the current public head of the chain, and the green blocks show the last $\omega$ public blocks. The red-dashed blocks represent potential blocks that could be created by an attacker with the highest available priority and all the missing endorsements for the previous $k$ blocks. The priorities of the potential blocks are $!(\bar{h}_j)$, where $!$ maps all non-zero elements to 0 and all 0 elements to 1 (e.g., $!(5)=0$, while $!(0)=1$). The highest available priority at slot $\ell-j$ is  0 if the block on the honest chain was not baked with 0 priority, and 1 otherwise. $\bar{\delta}_h$ is the amount of time take for the honest network to create blocks in slots $\ell-k+1$ through $\ell$, and $\bar{\delta}_a$ is the same calculation  for the potential blocks.
    \label{fig:detection}}
\end{figure*}
\begin{figure*}[t!]
    \centering
    \includegraphics[width=0.7\textwidth]{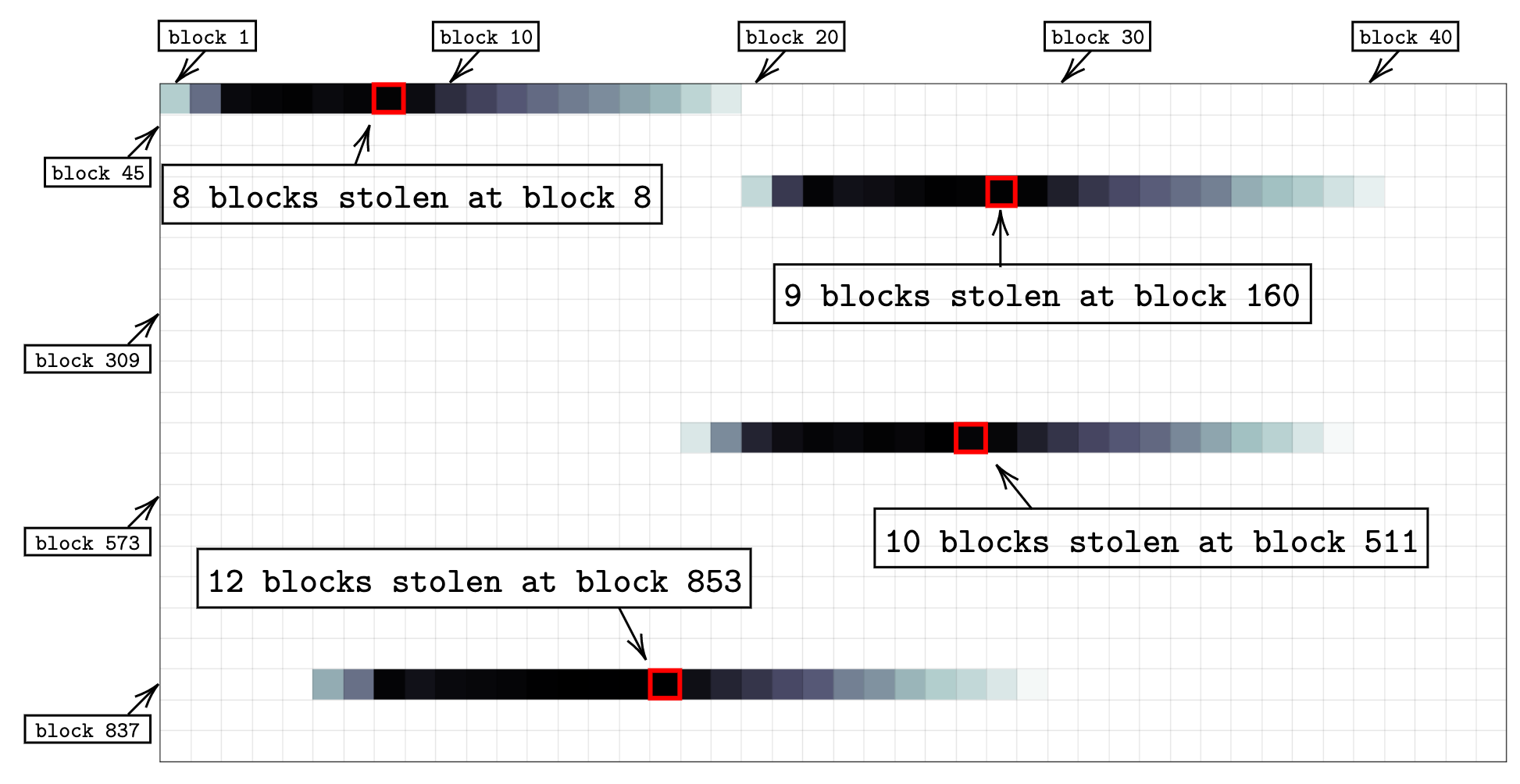}
    \caption{The value of the health metric, $\mathcal{H}(\bar{S})$, over the course of 968 simulated blocks. Each square of the grid represents a block, and the order in time is from left to right, and from top to bottom. The block health is represented by the color, with white being perfectly healthy and black being perfectly unhealthy. The simulation 
    considers an attacker with  $\alpha=0.375$ stake  and who  considers attacks of length 8 or greater. The red squares indicate blocks at which the attacker carries out a reorg, and 
    are labelled with the block at which the reorg occurred and the number of blocks that are deleted. In each case, the health metric immediately detects the malicious behavior, and signifies an unhealthy chain all the way through the attack.
    \label{fig:simulation}}
\end{figure*}

In this section, we introduce a method to detect malicious reorgs that are in progress on the Tezos blockchain. The calculations that we provide correspond to the current Tezos PoS protocol, but the method can also be used for other design parameters.
This metric allows vulnerable states of the Tezos chain to be identified, and users can know when to be careful with large transactions. It doesn't eliminate the concern about deep reorgs, mainly because it is difficult to get large adoption of the metric within the community. Further, if no transactions are being accepted during a deep reorg, an attacker can effectively execute a denial of service attack, which is also highly undesirable.

We denote the state variables with overset bars to indicate that the variable refers to slots in the \textit{past} rather than the future (e.g., $\bar{h}_{\ell-k}$ denotes the priority with which a block at slot $\ell-k$ was baked by the honest network). 
Let $\omega$ define the {\em window of security}, or the number of blocks in the past that we will consider. Intuitively, this corresponds to the maximum length attack that is accounted for, and we only consider forks starting at slots $(\ell-\omega+1, \;\ell-\omega, \;..., \;\ell)$ given the current public head is at slot $\ell$. Denote the state of the past $\omega$ blocks as  $\bar{\mathcal{S}}$. This is different from the state as viewed by the attacker, 
because $\bar{\mathcal{S}}$ is backward-looking in time rather than forward-looking, 
and only a function of things that have already been committed to the chain and thus are available publicly. 
Let $\bar{\mathbf{h}} = (\bar{h}_{\ell-\omega+1}, \bar{h}_{\ell-\omega}, ..., \bar{h}_{\ell})$ denote the priorities with which the last $\omega$ blocks were published, and $\bar{\mathbf{e}} = (\bar{e}_{\ell-\omega}, \bar{e}_{\ell-\omega+1}, ..., \bar{e}_{\ell-1})$ denote the number of endorsements over the same sequence of blocks. 
The current state at slot $\ell$ is denoted $\bar{\mathcal{S}} = \{\bar{\mathbf{h}},\bar{\mathbf{e}}\}$. 

Figure~\ref{fig:detection} shows the {\em forking possibilities} (e.g., blocks at which the attackers fork could have started) that the detection metric uses to determine the health of the state, $\bar{\mathcal{S}}$. 
Let $\bar{\mathcal{S}}_j$
 denote the $j^{th}$ slot in the past.
 Let  $\bar{\delta}_{h}(\bar{\mathcal{S}}, k)$ denote the minimum time  needed to create the past $k$ blocks using the priority levels and endorsement counts found on the past $k$ blocks on the chain (e.g., the last $k$ values of the sequences $\bar{\mathbf{h}}$ and $\bar{\mathbf{e}}$), i.e.,
\begin{align}
    \bar{\delta}_{h}(\bar{\mathcal{S}}, k) &= \sum_{j=0}^{k-1} \mathcal{D}(\bar{h}_{\ell-j}, \bar{e}_{\ell-j-1}).
\end{align}

Let $\bar{\delta}_a(\bar{\mathcal{S}}, k)$ denote the minimum time it would take an attacker to create the last $k$ blocks while using the best remaining priority and all the missing endorsements from the public chain (e.g., if the public block at slot $\ell-k$ was baked with priority $0$ and $22$ endorsements, then the best block the attacker could create would be with priority $1$ and $10$ endorsements). 

The time it will take the attacker to create a block at slot $j$ in the past is, 
\begin{align}
    \bar{\delta}_{a,j}(\bar{\mathcal{S}_j}) &= 
    \begin{cases}
        \mathcal{D}(0, 32-\bar{e}_{\ell-j-1}) & \textrm{if } \bar{h}_{\ell-j} \neq 0 \\
        \mathcal{D}(1, 32-\bar{e}_{\ell-j-1}) & \textrm{otherwise.} \\
    \end{cases}
\end{align}

The total time it would take the attacker to create the last $k$ blocks is,
\begin{align}
    \bar{\delta}_{a}(\bar{\mathcal{S}}, k) &=
    \begin{cases}
        \mathcal{D}(0, 32) + \sum_{j=1}^{k-1}  \bar{\delta}_{a,j}(\bar{\mathcal{S}_j}) & \textrm{if } \bar{h}_{\ell-k+1} \neq 0 \\ 
        \mathcal{D}(1, 32) + \sum_{j=1}^{k-1}  \bar{\delta}_{a,j}(\bar{\mathcal{S}_j}) & \textrm{otherwise.}
    \end{cases}
\end{align}

The first block on the attacker's chain will include all 32 endorsements because the endorsements for the last common block will be valid. The rest of the blocks on the attacker chain will only include $32 - \bar{e}_i$ endorsements for each slot $i$. 

We are interested in the difference between these two times, $\Delta_k(\bar{\mathcal{S}}) = \bar{\delta}_{a}(\bar{\mathcal{S}}, k) - \bar{\delta}_{h}(\bar{\mathcal{S}}, k)$.
%\end{align}
%
If $\Delta_k(\bar{\mathcal{S}})$ is small,  an attacker could have created $k$ valid blocks  close to the time it took the honest network, and thus could overtake the public chain and perform a reorg in the future. Now we consider the value of $\Delta_k(\bar{\mathcal{S}})$ that corresponds to a perfectly healthy chain.
Intuitively, under ideal chain conditions (e.g., highest priority of each block and all 32 endorsements included), $\Delta_k(\bar{\mathcal{S}})$ grows linearly with regard to $k$, because each block created is a constant time faster than the next fastest block, and $\Delta_k(\bar{\mathcal{S}}) \propto k$.
Dividing $\Delta_k(\bar{\mathcal{S}})$ by $k$ gives a value that corresponds to the  health of the chain. This leads us to a health metric for the Tezos blockchain which describes how probable a reorg is in the future.
\begin{definition}[Health metric]
Given the priorities, $\bar{\mathbf{h}}$, and endorsement counts, $\bar{\mathbf{e}}$, of the public chain over the last $\omega$ blocks, we define the health of the chain, $\mathcal{H}$, as 
\begin{align}
    \mathcal{H}(\bar{S}) &= 
    \begin{cases}
        \min_{k < \omega}\frac{\Delta_k}{k} & \textrm{if } \min_k \Delta_k> 0 \\ 
        0 & \textrm{otherwise}.
    \end{cases}
\end{align}
\end{definition}

This  metric $\mathcal{H}(\bar{S})$ 
returns zero if the minimum value of $\Delta_k$ is less than zero. 
In this case, the attacker can already override the honest chain, because 
the attacker 
can create $k$ blocks faster than the honest network.  
The minimum value of the metric is zero,  corresponding to the lowest possible chain health.
We take the minimum value of the ratio with respect to $k$ because this represents the block in the past from which there is the highest risk of a fork. 

The relationship with the delay function (\ref{eq:delay}) makes the health metric  highly sensitive to missing block priorities, which each incur a 40 second time penalty. On the other hand, a few missing endorsements has a  minimal effect on the metric because this does not slow down the honest network. 

Through an analysis of 10,000 blocks, we find that 9,932 of them were baked with priority 0, and thus several low priority blocks being baked in a row is a strong signal of an attack and  will  be immediately detected. Additionally,  the endorsement counts are distributed normally with a mean of 22.26 and a standard deviation of 2.28. Under these  conditions, the metric will report a low risk of forks, because no potentially faster forks can be made with such limited withholding of endorsement and priorities. 

The health metric assumes that the attacker is not double baking or endorsing, which refers to a single delegate creating or endorsing multiple blocks of the same height. Double baking would only be of use to the attacker if they were able to split the honest majority, but based on our model, where both honest and attacker operations are instantly heard over the network, this is impossible. Thus the only way in which the attacker can slow down the honest chain is to withhold endorsements and blocks with high priority. Thus, participating in any way on the honest fork would only decrease the time between blocks for the honest fork and would not speed up the selfish fork, and thus would be of no use to the attacker.

\subsection{Simulations}
In order to demonstrate the effectiveness of the health metric, $\mathcal{H}(\bar{S})$, 
we simulate the network with an attacker executing malicious reorgs. 
The metric can also be used to detect selfish mining attacks by looking for endorsement and priority withholding.

Figure~\ref{fig:simulation} shows the health metric over the course of 968 simulated blocks. Each cell represents a block on the honest chain, and they are ordered from left to right and top to bottom (e.g., top left cell is block 1, and bottom right cell is block 968). Low values of health are indicated by the darker values of cells. In the simulation there is  an attacker controlling $\alpha=0.375$, and we assume the attacker only considers reorgs of length 8 or greater, and gives preference to longer range attacks (e.g., if both length 8 and length 10 reorgs are possible, the attack of length 10 is chosen).

The red boxes indicate the blocks at which the attacker carries out a reorg, and are labelled with the index of the block where the deletion occurred, as well as the number of honest blocks that were removed. 
From these results, we see that the health metric immediately detects the malicious behavior and indicates an extremely unhealthy chain all the way through the attack. Notice that as the attacker begins to withhold blocks on a private chain, the colors of the cells quickly transition from white to black, indicating that an attack is becoming very likely.

\section{Conclusion}
\label{sec:conclusion}

We have presented an analysis of the possibility of malicious reorgs on the Tezos protocol in which we have formulated statistical techniques for determining the degree of the protocol's vulnerability to malicious reorgs and demonstrated that a different choice of parameters can reduce the probability of attack.
In considering alternate configurations of the Tezos PoS protocol, we evaluated an objective function with different weights assigned to the probability of selfish mining versus that of malicious reorgs. 
We also provided a method for detecting malicious reorgs by defining a health metric that makes use of publicly available information to identify vulnerable chain states. This metric can be employed by users of the Tezos blockchain, including merchants, to determine how long to wait before confirming large transactions as valid. 

There are many interesting future directions to this work. The simplest may be to apply similar techniques to other protocols (perhaps attempting to extend analysis to protocols that rely on fork choice rules that do not use the longest-chain heuristic), to evaluate their security and find network parameters that optimize for transaction safety. One may also consider ensemble attacks that combine malicious reorgs with attacks on the chain's peer-to-peer communication network, 
which may prove to be more effective than reorgs in isolation. Further, we only consider alternative delay functions with the same functional form as the current Tezos implementation. It would be interesting to examine how other kinds of delay and reward functions, potentially of non-linear form, impact the probability of reorgs.

%% The acknowledgments section is defined using the "acks" environment
%% (and NOT an unnumbered section). This ensures the proper
%% identification of the section in the article metadata, and the
%% consistent spelling of the heading.
\begin{acks}
The authors would like to thank Lacramioara Astefanoaei, Eugen Zalinescu, Arthur Breitman, and Bruno Blanchet, for helpful discussions and comments. 
Additional thanks are due to the three anonymous reviewers. This work is supported in part by two generous gifts to the Center for Research on Computation and Society at Harvard University, both to support research on applied cryptography and society. Daniel J.~Moroz was also supported in part by the Ethereum Foundation.
\end{acks}

%%
%% The next two lines define the bibliography style to be used, and
%% the bibliography file.
\bibliographystyle{ACM-Reference-Format}
\bibliography{refs}

\appendix

\section{Direct Calculation of Attack Feasibility}\label{app:analytic}

Using the state variable distributions and our feasibility function, we can directly calculate an approximation of the probability of a length-$n$ malicious reorgs by finding each permutation of the state space variables that corresponds to a feasible attack, and adding all of the probabilities of those occurring. Procedure \ref{proc:proc1} demonstrates this approach. 

However, there are two challenges. First, the domains of $A$ and $H$ are  infinite, thus we cannot sum over  every permutation of the state variables, and we have to approximate this by truncating. We chose a truncation point by using the criterion that we only consider states where every slot of the state occurs with probability greater than $10^{-8}$. This results in a very minimal underestimation of the true probability (on the order of $10^{-6}$ in our experiments), which is several orders of magnitude smaller than the actual probabilities as shown in Table \ref{tab:exact} in Appendix \ref{app:analyticTable}.  
\begin{lemma}\label{lem:153}
If $a_i \geq 153$, then $\mathrm{Pr}[\mathcal{S}_i] \leq 10^{-8}.$
\end{lemma}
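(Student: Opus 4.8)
The plan is to reduce $\mathrm{Pr}[\mathcal{S}_i]$ to a single-variable tail bound and then verify that bound at the truncation threshold. First I would invoke the logical equivalence $A \neq 0 \iff H = 0$ stated just before~\eqref{eq:probstate}: since $a_i \geq 153 > 0$ we have $a_i \neq 0$, which forces $h_i = 0$, so the joint priority probability falls in the second case, $\mathrm{Pr}[A = a_i, H = h_i] = (1-\alpha)^{a_i}\alpha$. Combining this with the factorization $\mathrm{Pr}[\mathcal{S}_i] = \mathrm{Pr}[A=a_i,H=h_i]\cdot\mathrm{Pr}[E=e_{i-1}]$ and the trivial bound $\mathrm{Pr}[E=e_{i-1}] \leq 1$ gives
\begin{align}
    \mathrm{Pr}[\mathcal{S}_i] \leq \alpha(1-\alpha)^{a_i}.
\end{align}

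Next I would control the right-hand side. As a function of $a_i$ it is strictly decreasing, since the base $1-\alpha$ lies in $(0,1)$, so it suffices to bound it at the endpoint $a_i = 153$, i.e.\ to show $\alpha(1-\alpha)^{153} \leq 10^{-8}$. The remaining work is the dependence on $\alpha$. The map $g(\alpha) = \alpha(1-\alpha)^{153}$ is \emph{not} monotone on all of $(0,1)$: it rises to a maximum at $\alpha = 1/154$ and decreases thereafter. I would therefore observe that $1/154 < 0.007$ lies well below any attacker stake in the direct-calculation regime, so on the relevant range $g$ is decreasing and its worst case is the smallest admissible $\alpha$. Evaluating there — for $\alpha = 0.1$ one has $\log_{10}(0.9)\cdot 153 \approx -7$, hence $0.1\cdot 0.9^{153} \approx 10^{-8}$ — yields the claim, and monotonicity of $g$ in $\alpha$ then extends it to all larger stakes.

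The main obstacle is exactly this $\alpha$-dependence: there is no uniform-in-$\alpha$ proof, because $\max_{\alpha}\alpha(1-\alpha)^{153} \approx 2\times 10^{-3}$, far above $10^{-8}$. The real content of the lemma is that $153$ is the correct cutoff index precisely for the smallest (and hence tail-heaviest) stake used, where the geometric tail of $A$ forces the largest truncation; the proof must make this fixed-$\alpha$ (or smallest-considered-$\alpha$) assumption explicit and then reduce to the one-line numerical check above. Everything else — the case split forcing $h_i = 0$, the estimate $\mathrm{Pr}[E=\cdot]\le 1$, and monotonicity in $a_i$ — is routine.
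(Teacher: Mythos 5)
Your proof is correct and follows essentially the same route as the paper's: bound $\mathrm{Pr}[\mathcal{S}_i]$ by the geometric factor $\alpha(1-\alpha)^{a_i}$, take the worst case $\alpha = 0.1$ (the smallest stake considered), and verify $0.1 \cdot 0.9^{153} \leq 10^{-8}$, exactly as the paper does when it solves $(0.9)^{a_i}\cdot 0.1 \leq 10^{-8}$. You are in fact slightly more careful than the paper, which asserts $\mathrm{Pr}[\mathcal{S}_i] \leq \mathrm{Pr}[a_i]$ without your explicit case split forcing $h_i = 0$ and the bound $\mathrm{Pr}[E=e_{i-1}]\leq 1$, and which plugs in $\alpha = 0.1$ without observing, as you do, that $\alpha \mapsto \alpha(1-\alpha)^{153}$ peaks at $\alpha = 1/154$ and is therefore decreasing on the entire range of stakes under consideration.
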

\begin{proof}
We use $a_i$ as the limiting bound because we always assume that the attacker has less than 50\% of the total stake in the system. Thus the domain of $A$ will be the largest. First note that the probability of a slot $\mathcal{S}_i$ is less than the probability of each state variable in that slot, and namely,
\begin{align}
    \mathrm{Pr}[\mathcal{S}_i] \leq \mathrm{Pr}[a_i].
\end{align}
Thus we need to choose an upper bound, $a'$ for the domain of $A$ such that for all $a_i \geq a'$, $\mathrm{Pr}[a_i] \leq 10^{-8}$. The smallest $\alpha$ we consider is $0.1$, so using the definition of the geometric random variable we establish,
\begin{align}
    \Pro [a_i] = (0.9)^{a_i} \cdot 0.1.
\end{align}
This gives the following inequality,
\begin{align}
    \Pro [a_i] \leq 10^{-8} \implies  (0.9)^{a_i} \cdot 0.1 &\leq 10^{-8},
\end{align}
which we solve directly
\begin{align}
    a_i &\geq \frac{\ln{10^{-7}}}{\ln{0.9}} \geq 152.
\end{align}
\end{proof}

Second, the state space grows exponentially in the number of blocks in the attacking fork. For example, consider the case of a length-2 reorg, where we use the following ranges for our state variables,
$
    A = [0, 1, 2, ..., 152], 
    H = [0, 1, 2, ..., 152] , 
    E = [0, 1, 2, ..., 32].
$

Because we need three slots of state variables to compute the probability of a have a length-2 reorg, the number of permutations is $153^3 \cdot 153^3 \cdot 33^3 \approx 4.6 \times 10^{17}.$
There are optimizations to reduce the size of the state space, but in the best case it remains exponential. In order to come to reasonable conclusions about arbitrary length reorgs, we use a Monte Carlo method (see Appendix \ref{app:mc}) as well as importance sampling (see Appendix \ref{app:impsample}) to approximate the probability of malicious reorgs of length 2 and greater.

\floatname{algorithm}{Procedure}
\begin{algorithm}[H] 
\caption{Find attack probability}
\label{proc:proc1}
\begin{algorithmic}[1]
\REQUIRE $\alpha$
\STATE \texttt{totalProb} $\leftarrow 0$
\FOR{$\mathcal{S} \in (\texttt{A} \times \texttt{H} \times \texttt{E})^n$}
\STATE $\mathbf{a}, \mathbf{h}, \mathbf{e} \leftarrow \mathcal{S}$
\IF{\texttt{XOR}$(\mathbf{a}, \mathbf{h})$ and $\F(\mathcal{S})$}
\STATE \texttt{totalProb+=} $ \mathrm{Pr}[\mathcal{S}]$  
\ENDIF
\ENDFOR
\RETURN{\texttt{totalProb}}
\end{algorithmic}
\end{algorithm}
This procedure checks every permutation of the possible values of the state variables $\{\mathbf{a}, \mathbf{h}, \mathbf{e}\}$ by taking the Cartesian product of the possible values for $A, H, E$, $n$ times. For each permutation, it first checks if the length-$n$ reorg is feasible, and if it is, adds the probability of it occurring to the total probability. We use \texttt{XOR} as a vector function that is applied element wise to two sequences and that maps any non-zero value to \texttt{true} (e.g., $\texttt{XOR}([1,0,3],[0,4,0]) = \texttt{true}$, while $\texttt{XOR}([1,0,3],[1,4,0]) = \texttt{false}$). Note that this isn't an exact probability because the domains of the geometric random variables $A$ and $H$ are infinite.

\section{Analytic and Approx Comparison}\label{app:analyticTable}
\begin{table}[H]
    \centering
    \begin{tabular}{|c|c|c|c|}
        \hline
        $\alpha$ & Analytic & Approx. & Error \\
        \hline
        \hline
        0.10 & 0.000142 & $0.000141\pm 7.4 \times 10^{-6}$ & $1.00 \times 10^{-6}$ \\
        0.15 & 0.001419 & $0.001406\pm 2.3 \times 10^{-5}$ & $1.29 \times 10^{-5}$ \\
        0.20 & 0.007789 & $0.007758\pm 5.4 \times 10^{-5}$ & $3.10 \times 10^{-5}$ \\
        0.25 & 0.029502 & $0.029561\pm 1.1 \times 10^{-4}$ & $5.90 \times 10^{-5}$ \\
        0.30 & 0.081157 & $0.081210\pm 1.6 \times 10^{-4}$ & $5.30 \times 10^{-5}$ \\
        0.35 & 0.176913 & $0.176781\pm 2.3 \times 10^{-4}$ & $1.32 \times 10^{-4}$ \\
        0.40 & 0.323585 & $0.323729\pm 2.9 \times 10^{-4}$ & $1.44 \times 10^{-4}$ \\
        0.45 & 0.504535 & $0.504405\pm 3.1 \times 10^{-4}$ & $1.30 \times 10^{-4}$ \\
        \hline
    \end{tabular}
    \caption{Comparing the results for feasibility of length 1 reorg with the Monte Carlo method algorithm to the analytic results calculated using the probability distributions for a range of values of $\alpha$. Each of the MC values was calculated using a sample size of $10^7$. The 99\% Clopper-Pearson interval is shown as the range of approximate values. We see small error for each of the approximations. Note that the error increases as the probability of the event increases because variance of a binomial random variable increases until $\alpha=0.5$.} 
    \label{tab:exact}
\end{table}

\section{Monte Carlo Method}\label{app:mc}
Let $\mathcal{S}$ denote the joint distribution on the random state variables, and $\vec{S}$ a  sample from $\mathcal{S}$, with
\begin{align}
    \vec{S} &= 
    \begin{bmatrix}
        a_{\ell+1} & ... & a_{\ell+n} & h_{\ell+1} & ... & h_{\ell+n} & e_{\ell+1} & ... & e_{\ell+n}
    \end{bmatrix}^{T}.
\end{align}
 
 A Monte Carlo estimate of the probability of a feasible attack, given $N$ samples, is 
\begin{align}
    \hat{p} &= \frac{1}{N} \sum_{i=1}^N \mathcal{F} (\vec{S}_i), \; \text{where } \vec{S}_i \sim \mathcal{S}.
\end{align}

This is just the number of states satisfying the feasibility property divided by the sample size (recall that $\mathcal{F}$ returns 1 if an attack is feasible and 0 otherwise). This is an unbiased estimator for the true probability $p$.

Algorithms \ref{sub:samplegeneration} \& \ref{alg:mc} demonstrate the efficient sample generation and general Monte Carlo algorithm used. Table \ref{tab:exact} in Appendix \ref{app:analyticTable} shows the comparison of the Monte Carlo approximation to the analytic values described previously and demonstrate that they produce very similar results (i.e., the Monte Carlo approximation has low error).
\subsection{Efficient Sample Generation}
The following algorithm is used to generate a valid sample for each Monte Carlo trial. 
\floatname{algorithm}{Subroutine}\label{sub:samplegeneration}
\begin{algorithm}[H] 
\caption{Get sample $a_i, h_i, e_i$}
\label{proc:proc2}
\begin{algorithmic}[1]
\REQUIRE $\alpha$
\STATE $h \leftarrow$ \texttt{randomSample}$(\textrm{Geometric}(1-\alpha))$
\IF{h = 0}
\STATE $a \leftarrow$ \texttt{randomSample}$(\textrm{Geometric}(\alpha)) + 1$
\ELSE 
\STATE $a \leftarrow 0$
\ENDIF
\STATE $e \leftarrow \texttt{randomSample}(\textrm{Binomial}(32,\alpha))$
\RETURN $(a, h, e)$
\end{algorithmic}
\end{algorithm}
One subtle feature of this is the assigning of $a$ in Line 3. Because $h$ already has the value of $0$ for this draw, we know that $a$ cannot be zero, so we add one to the result.

\subsection{Standard Monte Carlo Method}
We use a standard Monte Carlo algorithm to estimate the probability of a feasible length-$n$ reorg. 
\floatname{algorithm}{Algorithm}\label{alg:mc}
\begin{algorithm}[H] 
\caption{Estimate probability of feasible attack - Standard Monte Carlo}
\label{proc:proc3}
\begin{algorithmic}[1]
\REQUIRE $\alpha$, \texttt{attackLength}, \texttt{sampleSize} 
\STATE \texttt{feasibleCount} $\leftarrow 0$
\FOR{$i = 0, 1, ... , \texttt{sampleSize}-1$}
\STATE \texttt{state} $\leftarrow \{\}$
\FOR{$ j = 0 , 1, ... , \texttt{attackLength}-1$}
\STATE $a, h, e \leftarrow \texttt{generateSample}(\alpha)$ 
\STATE \texttt{state.add}$(a,h,e)$
\ENDFOR
\IF{$\F(\texttt{state})$}
\STATE \texttt{feasibleCount++}
\ENDIF
\ENDFOR
\RETURN \texttt{feasibleCount / sampleSize} 
\end{algorithmic}
\end{algorithm}

\section{Importance Sampling}\label{app:impsample}

Monte Carlo estimation has the undesirable property that in the case of rare events, $n$ must be extremely large to obtain a tight bound for the confidence intervals around the estimator for $p$. In particular, we are interested in obtaining a probability estimate for attacks with probability, $ p \approx 10^{-8}$ of being feasible. To improve sample efficiency we make use of \textit{Importance Sampling}. 
This makes use of a different probability mass function $q(\vec{S})$, the proposal distribution,
which has a  higher weight on states that have the feasibility property. Then using samples from the new probability mass function, $\vec{S}_1, \vec{S}_2, ..., \vec{S}_n \overset{i.i.d.}{\sim} q(\mathcal{S})$, an 
unbiased estimator for $p$ is,
\begin{align}
    \hat{p} &= \frac{1}{n} \sum_{i=1}^n \frac{\mathcal{F}(\vec{S}_i)p(\vec{S}_i)}{q(\vec{S}_i)}.
\end{align}

We construct the following estimator for the variance of our estimated mean:
\begin{align}
    \hat{\sigma}^2 = \frac{1}{n} \sum_{i=1}^n \left[\frac{\mathcal{F}(\vec{S}_i)p(\vec{S}_i)}{q(\vec{S}_i)} - \hat{p} \right] ^2.
\end{align}
Thus we can construct a 99\% confidence interval for the true value of $p$ as,
$\hat{p} \pm 2.58 \hat{\sigma} / \sqrt{n}$. We now discuss the practical considerations of implementing importance sampling in this setting, as well as the $q$ distributions we used.

\subsection{Practical considerations of Importance Sampling}\label{app:practicalimportance}

We need to design a proposal distribution $q(\vec{S})$ such that the variance of our estimator is small. 
For this, we use a different value of $\alpha$ to parameterize the state variable distributions. The distribution
on state variables at each block height are,
    $A \sim \mathrm{Geometric}(\alpha),
    H \sim \mathrm{Geometric}(1- \alpha), 
    E \sim \mathrm{Binomial}(32, \alpha).$

For the proposal distribution  we choose $\alpha_q > \alpha$. By increasing the value of $\alpha$, we are ensured that the probability of a feasible attack will rise, and thus importance sampling detects them at a higher rate than classical Monte Carlo. However, it is important that we do not increase $\alpha$ by too much because this makes the likelihood ratio very small, and can lead to  numerical precision errors. Because we are using a standard double floating point representation with 64 bits and a 53 bit mantissa, we can expect around 16 decimal places of accuracy. Thus, we keep the likelihood ratio  above $10^{-16}$ to provide stability. Choosing $\alpha_q$ is a nontrivial process, and for this reason we only used importance sampling when the variance of the  Monte Carlo estimator is very high. Through experimentation we found that the heuristic $\alpha_q = \alpha + 0.05$ was sufficient for attacks of length 20 and 35, and the heuristic  $\alpha_q = \alpha + 0.03$ for attacks of length 55 and 80. This requires a bit of balancing to ensure that attacks are much more probable with the value of $\alpha_q$, but the likelihood ratio (the ratio of probability for a state to occur in $p$ over that same state probability in $q$) isn't too small. We found a suitable range of $\alpha_q$ by incrementing it by $0.01$ until the likelihood ratio fell below $10^{-16}$, and using a value approximately in the middle of that range.

\subsection{Variance reduction using Importance Sampling}

\begin{figure}[h!]
    \centering
    \includegraphics[width=0.37\textwidth]{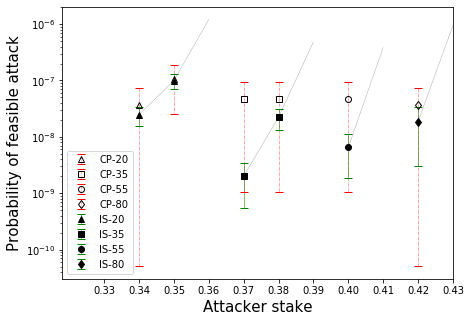}
    \caption{Importance sampling confidence intervals and Clopper-Pearson intervals for all the estimators with probability under $10^{-7}$. This demonstrates how effective importance sampling is at reducing the variance for our problem. Each marker is labelled with the reorg length and the color of the error bars denotes whether it is importance sampling (IS) or Clopper-Pearson (CP).}
    \label{fig:importancesamp}
\end{figure}

\end{document}